\newtheorem{theorem}{Theorem}
\journal{Physica A: Statistical Mechanics and its Applications}
\begin{document}

\begin{frontmatter}

%% Title, authors and addresses

\title{Evolution of cooperation in networked heterogeneous fluctuating environments}

%% use the tnoteref command within \title for footnotes;
%% use the tnotetext command for the associated footnote;
%% use the fnref command within \author or \address for footnotes;
%% use the fntext command for the associated footnote;
%% use the corref command within \author for corresponding author footnotes;
%% use the cortext command for the associated footnote;
%% use the ead command for the email address,
%% and the form \ead[url] for the home page:
%%
%% \title{Title\tnoteref{label1}}
%% \tnotetext[label1]{}
%% \author{Name\corref{cor1}\fnref{label2}}
%% \ead{email address}
%% \ead[url]{home page}
%% \fntext[label2]{}
%% \cortext[cor1]{}
%% \address{Address\fnref{label3}}
%% \fntext[label3]{}

%% use optional labels to link authors explicitly to addresses:
%% \author[label1,label2]{<author name>}
%% \address[label1]{<address>}
%% \address[label2]{<address>}
\author{Viktor Stojkoski$^{1,2}$}
\author{Marko Karbevski$^{2,3,4,5}$}
\author{Zoran Utkovski$^{6}$}
\author{Lasko Basnarkov$^{7}$}
\author{Ljupco Kocarev$^{2,7}$}

\date{\today}

\address{
$^{1}$ SS. Cyril and Methodius University, Faculty of Economics,, blvd. Goce Delcev 9V, 1000 Skopje, North Macedonia}
\address{
$^{2}$ Macedonian Academy of Sciences and Arts, P.O. Box 428, 1000 Skopje, North Macedonia}
\address{
$^{3}$Sorsix International, Dame Gruev 18, Skopje 1000, Republic of Macedonia}
\address{
$^{4}$Institute of Mathematics,
Faculty of Natural Sciences and Mathematics,
Ss. Cyril and Methodius University,
Arhimedova 3, 1000 Skopje, Republic of Macedonia}
\address{
$^{5}$Sorbonne Université, 4 Place Jussieu, 75005 Paris, France}
\address{
$^{6}$Fraunhofer Heinrich Hertz Institute, Einsteinufer 37, 10587, Berlin, Germany}% 
\address{
$^{7}$SS. Cyril and Methodius University, Faculty of Computer Science and Engineering,  P.O. Box 393, 1000 Skopje, Macedonia}%

\begin{abstract}

Fluctuating environments are situations where the spatio-temporal stochasticity plays a significant role in the evolutionary dynamics. The study of the evolution of cooperation in these environments typically assumes a homogeneous, well mixed population, whose constituents are endowed with identical capabilities. In this paper, we generalize these results by developing a systematic study for the cooperation dynamics in fluctuating environments under the consideration of structured, heterogeneous populations with individual entities subjected to general behavioral rules. Considering complex network topologies, and a behavioral rule based on generalized reciprocity, we perform a detailed analysis of the effect of the underlying interaction structure on the evolutionary stability of cooperation. We find that, in the presence of environmental fluctuations, the cooperation dynamics can lead to the creation of multiple network components, each with distinct evolutionary properties. This is paralleled to the freezing state in the Random Energy Model. We utilize this result to examine the applicability of our generalized reciprocity behavioral  rule in a variety of settings. We thereby show that the introduced rule leads to steady state cooperative behavior that is always greater than or equal to the one predicted by the evolutionary stability analysis of unconditional cooperation. As a consequence, the implementation of our results may go beyond explaining the evolution of cooperation. In particular, they can be directly applied in domains that deal with the development of artificial systems able to adequately mimic reality, such as reinforcement learning. 
\end{abstract}

\begin{keyword}
Cooperation \sep Fluctuating environments \sep Generalized reciprocity
%% keywords here, in the form: keyword \sep keyword

%% MSC codes here, in the form: \MSC code \sep code
%% or \MSC[2008] code \sep code (2000 is the default)

\end{keyword}

\end{frontmatter}

%%
%% Start line numbering here if you want
%%
%\linenumbers

\section{Introduction}

Cooperation is ubiquitous in nature and is essential
to the functioning of a large number of biological systems. This phenomenon has played a fundamental role in many of
the major transitions in biological evolution, and cooperative interactions are required for many levels of
networked organizations ranging from single cells to groups of
animals and, ultimately, humans. In evolutionary biology, theoretical models provide evidence that cooperative
behavior can evolve and persist if a certain mechanism is at work, for example, kin selection, group (multilevel) selection, and different forms of reciprocity~\cite{Nowak-2006five,utkovski2017promoting}. In addition, it has been recognized that the network structure plays an important
role in the emergence of cooperation, as 
underlying network properties such as network heterogeneity, scale-freeness, etc., crucially determine the outcome of multiple
dynamical phenomena.

A standard approach for examining the effect of different mechanisms on the cooperation dynamics in complex
networks is through evolutionary game theory where the individual entities interacting in a network
are given a set of strategies that they can choose from,
and a set of payoffs (changes in the individual resource
endowment) that result from interactions with other entities based on their chosen strategies. Traditional game theory has predominantly focused on additive payoffs, which yields the world of two-player iterative social dilemmas, such as the iterated prisoner's dilemma, or multi-player (i.e. public goods) games~\cite{stojkoski2018cooperation}.

In contrast to interaction models with additive payoffs, the temporal evolution of the individual resource endowments in natural and social systems is often characterized by a multiplicative process~\cite{meder2019ergodicity}.  In evolutionary biology and ecology, these systems are studied within the general framework of \textit{fluctuating environments} , which refers to spatio-temporal stochasticity in the environmental conditions~\cite{saether2015concept}. Environmental fluctuations can be integrated in models for evolutionary games in various ways, including stochastic network structures~\cite{antonioni2011network,melamed2016strong,zimmermann2004coevolution,jordan2013contagion}, fluctuating population size~\cite{behar2014coexistence,houchmandzadeh2015fluctuation,houchmandzadeh2012selection,huang2015stochastic,gokhale2016eco,constable2016demographic} and random payoffs~\cite{stollmeier2018unfair,peters2015evolutionary,yaari2010cooperation,liebmann2017sharing}. In our concept, we follow the last strand of works, and model environmental fluctuations as randomly varying payoff values.

In environments subjected to fluctuations in the payoff structure, the non-ergodicity of the fluctuation-generating process may have non-trivial effects on the evolutionary dynamics~\cite{saether2015concept}. In particular, when the growth in the resource endowments is governed by a multiplicative process, the resulting fluctuations exhibit a net-negative effect on the time-averaged growth of the resources of the individual entities, while having no effect on the ensemble growth rate. Since, on the long run the ensemble properties are never observed, the time-averaged growth is the only relevant quantity for evolutionary performance~\cite{peters2016evaluating,peters2019ergodicity}. Recently, it has been discovered that this yields an evolutionary behavior which essentially differs from the one observed in standard models, i.e., leads to evolutionary dynamics  where cooperators can coexist with defectors without any additional auxiliary mechanism at place~\cite{stollmeier2018unfair}. For example, on this basis, it has been hypothesized that repeated pooling and sharing of resources which previously exhibit a fluctuating growth may constitute a fundamental mechanism for the evolution of cooperation in a well-mixed population. The rationale is that, by reducing the amplitude of fluctuations, pooling and sharing increases the steady state growth rate at which the cooperating entities self-reproduce~\cite{yaari2010cooperation,liebmann2017sharing,peters2015evolutionary}.

The research on the evolution of cooperation in fluctuating environments typically focuses on the following simplifying assumptions: i) the population is homogeneous, i.e. all entities are characterized by the same  individual traits, ii) the interaction takes place in a well-mixed population, and iii) the individual entities can decide only between two behavioral strategies. The first assumption constraints each entity to have the same physical characteristics and to be subjected to the same fluctuation process. In reality, however, entities are \textit{heterogeneous}: they are endowed with different capabilities and experience non-identical randomness. These traits can represent the level of skills or income profiles in economic societies~\cite{guvenen2007learning,gabaix2016dynamics} or sex differences in development, physiology, morphology and/or life-history in biological systems~\cite{forsman2018role}.
The second assumption implies that entities spend, on average, an equal time interacting with each other member of the population. In practice, individual entities are most often interacting through a social \textit{network} of contacts (or within a neighborhood in a biological network), thus favoring interactions with a certain group. It has been recognized that such network structures crucially determine the outcome of a multitude of dynamical phenomena, including cooperation~\cite{Dorogovtsev-2008}. The last assumption reduces the actions of the individual entities to a set of two possible outcomes. In other words, this assumption  does not allow for learning to take place in the population. While all these assumptions allowed for powerful analytical findings, they can significantly reduce practical applications.

In this paper, we revisit the above assumptions and develop a systematic approach for investigating the cooperation dynamics in fluctuating environments by considering structured, heterogeneous populations with individual entities subjected to more general behavioral rules. We consider an interaction model on a complex network based on pooling and sharing of resources undergoing multiplicative growth that models environmental fluctuations. We know from~\cite{yaari2010cooperation,liebmann2017sharing}, that in a well-mixed homogeneous population, cooperation is always favored under this model. When moving from a well-mixed population to a more general network structure, in Ref.~\cite{stojkoski2019cooperation} it was shown that cooperation remains the sole evolutionary stable strategy (given that the population is homogeneous),  but the performance of the population is uniquely determined by the underlying interaction topology. Complementing these works, here we investigate the joint effects of heterogeneous population and network topology. A crucial observation from the analysis is that, when population heterogeneity is introduced to the model, complex behavior emerges and the stability of cooperation is dependent on both the network topology and the behavioral update rules employed by the individual entities.

By analyzing the model from an evolutionary perspective, we derive the criteria required for unconditional cooperation to be evolutionary stable within any interaction structure. More importantly, by applying these criteria to simple and tractable analytical examples, as well as to more complex numerical examples, we are able to show that the interaction structure may yield a non-trivial effect on the behavior in multiplicative dynamics. In particular, the interaction structure can induce creation of multiple connected components, each consisting of unconditional cooperators. Even though the multiplicative nature of the process makes the distribution of resources among the cooperating entities to have a an infinite average in the time limit, the connected components instigate dynamics in which some unconditional cooperators are much ``richer'' than others. This effect is related to the freezing transition in the Random Energy model of Derrida~\cite{derrida1980random}, and in our model occurs as a result of the heterogeneous individual capabilities, in combination with the network structure~\cite{gueudre2014explore}.

The evolutionary stability analysis assumes that each individual entity implements strategies that allow only for unconditional cooperator or unconditional defector behavior. As such it does not allow us to examine the effect of the behavioural learning update rules employed by the individual entities. Due to the non-ergodicity, which arises as a consequence of the fluctuating environment, standard rules cannot be easily translated to our model~\cite{peters2016evaluating}. Instead, one should use a suitable modification of the rule which is studied. 

Here we develop a simple behavioral update and examine its implications. The introduced rule is related to the concept of \textit{generalized reciprocity}, which itself is rooted in the principle of ``help anyone if helped by someone''~\cite{taborsky2016correlated}. In particular, under our rule the individual entities search for their optimal strategies solely by comparing their observed growth rate with the one that is expected under unconditional defection. From a game-theoretic perspective the rule may be framed in the context of continuous games, where entities may be able to chose from a continuum of strategies~\cite{utkovski2017promoting}, or as a stochastic decision making process, with random payoffs determined by the entity's experience~\cite{stojkoski2018cooperation}. Direct parallels can be made to state-of-the-art reinforcement learning techniques based on novelty search~\cite{lehman2008exploiting}. We thereby show that our rule leads to steady state cooperative behavior that is always greater than or equal to the one predicted by evolutionary stability analysis of unconditional cooperation. In fact, for a certain regime of parameter values we are able to analytically solve the model and show that then the strict inequality holds.

The rest of the paper is organized as follows. In Section~\ref{sec:model} we introduce the system model, together with a brief summary of the properties of pooling and sharing in a heterogeneous environment. In Section~\ref{sec:evolutionary-stability} we study the model from the perspective of evolutionary game theory, and derive the criteria for evolutionary stability of unconditional cooperation on complex networks. Here, we also solve several analytically tractable examples that elucidate the role of the network structure. We also examine numerically the extent to which different complex network topologies/structures promote cooperative behavior. In Section~\ref{sec:generalized-reciprocity} we present a behavioral strategy based on generalized reciprocity and study the properties of the resulting cooperation dynamics. The last section briefly summarizes our findings.

\section{Model}
\label{sec:model}

\subsection{Interaction structure}

In our model we consider a population of $N$ individual entities interacting through $M$ pools. The pools may represent various compositions that  facilitate the sharing of resources. For instance, in biological systems the pools may constitute structures that ease the transport of nutrients~\cite{roper2013cooperatively}. Similarly, in economics the pools may be seen as equivalent to baskets that are  used to share commodities in community-supported agriculture~\cite{sapir2004agenda}. In the model, the interaction structure is described by a connected bipartite random graph, whose adjacency matrix $\mathbf{B}$, with binary edge variables $B_{im} \in \left\{0,1 \right\}$, captures the participation of individual $i$ in pool $m$ ($\mathrm{B}_{im} = 1$, indicating participation of $i$ in pool $m$).
The bipartite graph representation, although not essential for our concept, offers a principled way of capturing wider information regarding the group composition and network interactions~\cite{perc2013evolutionary,gomez2011disentangling,pena2012bipartite} than standard unipartite graphs, where each node behaves as both an individual entity and a pool through which resources are shared~\cite{perc2013evolutionary}. As a result, bipartite graphs are a convenient tool for modeling pooling and sharing on networks.  
In particular, each unipartite graph can be mapped into a bipartite graph by considering a replacement graph procedure, while the opposite is not true in general, as illustrated in Fig.~\ref{fig:bipartite-network}. In this representation each node behaves as both an individual entity and a pool through which resources are shared~\cite{perc2013evolutionary}. 
We refer to~\cite{perc2013evolutionary,pena2012bipartite,stojkoski2019cooperation,tomassini2020public} for more details on the procedure for constructing a replacement graph.

We provide an example for this property in Fig.~\ref{fig:bipartite-network} where we draw two bipartite graphs, each composed of three entities. In addition, the first graph is composed of two pools (Fig.~\ref{fig:bipartite-network}a), whereas in the second graph there is an additional third pool (Fig.~\ref{fig:bipartite-network}b). In the second row of the figure we give the simple graph representation. For the first bipartite graph there is no such representation, but the second graph can be represented in a standard representation that allows for self-loops. Both bipartite graphs can be described as a unipartite only if we include more complex replacement graph representation that allows for edge directions and weights (bottom row of the figure).

\begin{figure}[t!]
\includegraphics[width=8.7cm]{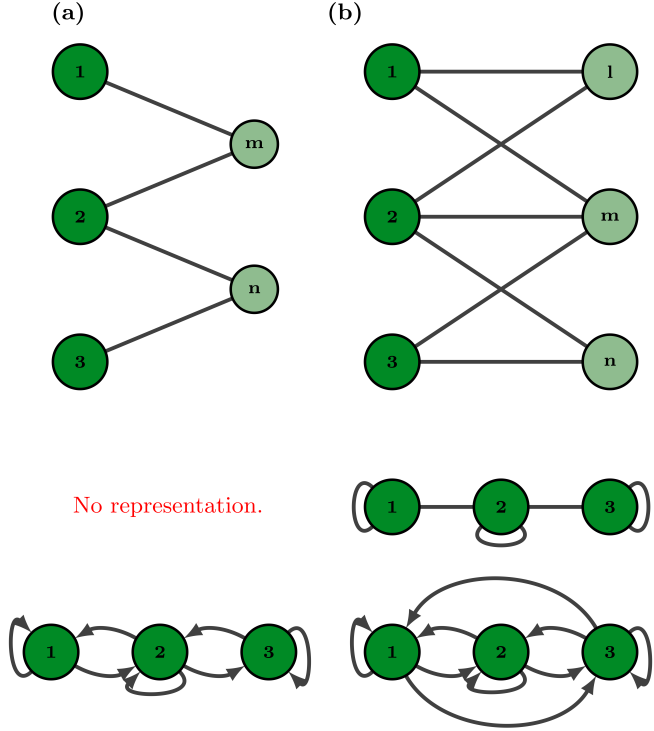}
\caption{\textbf{Bipartite graph representation.} \textbf{(a)} First row: a bipartite graph consisting of tree entities and two pools; Second row: its unipartite representation; Third row: the weighted replacement graph. \textbf{(b)} Same as \textbf{(a)}, only now there are three pools and the connections between the entities and pools are slightly different.  \label{fig:bipartite-network}}
\end{figure}

We consider discrete-time dynamics and track the evolution of the individual resource endowments $y_i(t)$, for all $i\in\mathcal{N}$. The resource growth is governed by a multiplicative process, where in each period the resource growth of the $i$-th entity is modeled by a drift $\mu_i$ and a noise amplitude $\sigma_i$. Note that we consider heterogeneous population where $\mu_i$ and $\sigma_i$ differ across the different members of the population. After the growth phase, the individual resources are a subject to a pooling phase where each entity $i$ distributes a fraction $p_i(t) / d_i$ of its resources to each of the $d_i = \sum_m \mathrm{B}_{im}$ pools where it participates. Finally, every pool shares a fraction $1/u_m$, with $u_m = \sum_i \mathrm{B}_{im}$, of the total pooled resources to each of its members. Formally, the evolution of the individual resources can be expressed as 
\begin{align}
y_i(t+\Delta t) &=\underbrace{\sum_j A_{ij} p_j(t) y_j(t) \left[ \Delta t + \mu_j \Delta t + \sigma_j \varepsilon_j(t) \sqrt{\Delta t} \right]  - p_i(t) y_i(t) \Delta t}_{\text{resources due to pooling and sharing}}\nonumber \\ &+ \underbrace{y_i(t) + \left(1 - p_i(t) \right) y_i(t) \left[ \mu_i \Delta t + \sigma_i \varepsilon_i(t) \sqrt{\Delta t} \right]}_{\text{individual resources}},
\label{eq:network-discrete}
\end{align}
where $\varepsilon_j(t)$ is a standardized Gaussian random variable and $\mathbf{A}$ is a transition matrix of the network with entries $\mathrm{A}_{ij} = \sum_m^M  \frac{B_{im}}{u_m} \frac{B_{jm}}{d_j}$ which determine the total allocated resources from entity $j$ to entity $i$. The variable $p_i(t) \in \left[ 0,1 \right]$ models the level of cooperation displayed by an individual entity in a particular period. The case of $p_i(t) = 0$ for all $t$ represents the situation when the entity always defects, i.e., the unconditional defector case. On the other hand, the case of $p_i(t) = 1$ for all $t$ is the unconditional cooperator setting. In a more general setting, the entities propensity too cooperate may vary over time based on their environment and thus $p_i(t)$ may be interpreted as an internal state that captures the propensity of an entity to engage in pooling and sharing of its resources.

Following~\cite{peters2015evolutionary},  we write Eq.~(\ref{eq:network-discrete}) in a continuous form by setting $\Delta t \to 0$
\begin{align}
\mathrm{d} y_i &= \left[\sum_j \mathrm{A}_{ij} p_j y_j - p_i y_i\right] \mathrm{d}t + \sum_j \mathrm{A}_{ij} p_j y_j \left( \mu_j \mathrm{d}t+ \sigma_j \mathrm{d}W_j  \right) + \left( 1 - p_i \right) y_i \left(\mu_i\mathrm{d}t + \sigma_i \mathrm{d}W_i \right),
\label{eq:network-differential}
\end{align}
where $(\mathrm{d}W_i)_{ i \in \mathcal{N}}$ are independent Wiener increments, i.e. $W_i(t) =\int_0^t \mathrm{d}W_i$. This representation allows us to utilize techniques such as It\^{o} calculus for the theoretical analysis of the model. As in~\cite{peters2015evolutionary}, when working with numerical estimations, we will resort the discrete formulation, with the note that in that case the  temporal differences will be rescaled such that $\Delta t = 1$. This, as we will see, allows for an easy Markov representation of the discrete model.

The resulting interaction structure is directly related to games of public goods on networks. The main difference is that in this model the fluctuations arise inherently from the individual traits owned by the entities, whereas in a public goods game the randomness is a feature of each pool. In other words, in public goods games the stochasticity comes as a result of the process of pooling, here there is randomness even without pooling~\cite{perc2013evolutionary,santos2008social,stojkoski2018cooperation}. Another similar model is the Bouchaud–Mezard model of wealth reallocation and its extensions~\cite{bouchaud2000wealth,bouchaud2015growth,gueudre2014explore}. The essential difference is that in our description, reallocation is done after the growth phase.

\subsection{Growth rate}
In a situation with additive resource dynamics, the relevant quantity for measuring the payoff of an individual entity is the steady state time-averaged absolute change of wealth~\cite{peters2019ergodicity}. This is not the case with our model, since the fluctuations introduced by the multiplicative process governing the resource dynamics make the system non-stationary, and hence non-ergodic. Under these circumstances, the relevant quantity for measuring the payoff is the steady state time-averaged growth rate of wealth, $g_i(y_i(t), t)$. In evolutionary biology, this observable is more known as the \textit{geometric mean fitness} for the accumulated payoff (i.e. resources) of a particular phenotype~\cite{saether2015concept}. Formally, it is defined as
\begin{align}
  g_i(y_i(t), t) &= \frac{1}{t}\log\left(\frac{y_i(t)}{y_i(0)}\right),
\end{align}
where $y_i(0)$ is the initial amount of resources. In what follows, we are going to assume $y_i(0)=1$. 

When the internal state for cooperation of each entity $i$ is $p_i^* = 0$, we have the unconditional defector situation and the solution to the model represents $N$ independent geometric Brownian motion (GBM) trajectories. Then, the steady state time-average growth rate can be easily found using It\^{o} calculus as
\begin{align}
g^{ \mathcal{D} }_i & = \lim_{t \to \infty} g^{\mathcal{D} }_i(y_i(t), t) = \mu_i - \frac{\sigma_i^2}{2},
\label{eq:defector-average-growth}
\end{align}
where the superscript $\mathcal{D}$ denotes the case in which no individual entity pools its resources. 
%that no individual entity pools its resources. 

On the other hand, when every entity cooperates unconditionally, i.e. $p_i(t) =1 $ for all $i$ and $t$, the growth rate of each entity converges to the same value,
\begin{align}
   g^{ \mathcal{C} } = \lim_{t \to \infty} g_i(y_i(t), t) = \langle \mu v \rangle_{\mathcal{N}} - \frac{1}{2N} \langle v^2 \sigma^2 \rangle_{\mathcal{N}},
\label{eq:full-network-coop}
\end{align}
where the superscript $\mathcal{C}$ denotes that every entity pools its resources, $\langle \cdot \rangle_{\mathcal{N}}$ is the population average and $v$ is an index given by the right-eigenvector associated with the largest eigenvalue of $\mathbf{A}$ normalized in a way such that $\sum_i v_i = N$. 

For cooperation to be evolutionary favored by the population, the quantity in Eq.~(\ref{eq:full-network-coop}) should be larger than the one of Eq.~(\ref{eq:defector-average-growth}). This is always the case if the population is homogeneous, i.e. the individual entities are subject the same drift and noise amplitude~\cite{stojkoski2019cooperation}. In a heterogeneous population where the individual entities exhibit different traits (i.e. they are subject to different drifts and noise amplitudes), the evolution of cooperation is dependent on both the network structure and the individual traits. We elaborate on this in more detail in the sequel.

\section{Evolutionary stability}
\label{sec:evolutionary-stability}

\subsection{Preliminaries}
We begin the analysis by considering the circumstance in which every individual entity can behave either as an unconditional cooperator ($p_i(t) = p_i^* = 1$ for all $t$), or as an unconditional defector ($p_i(t) = p_i^* = 0$ for all $t$), and examine the Evolutionary Stable State (ESS). As discussed in Nowak~\cite{Nowak-2006five}, an ESS of unconditional cooperation is a situation in which a large population of cooperators cannot be invaded by defectors under deterministic selection dynamics. ESS has been widely used for determining the level of cooperative behavior in various interaction structures~\cite{Perc-2017}.

In a simple interaction structure consisting of two entities $i$ and $j$, the possible outcomes are described by a payoff matrix
  \begin{align}
    \setlength{\extrarowheight}{2pt}
    \begin{tabular}{cc|c|c|}
      & \multicolumn{1}{c}{} & \multicolumn{2}{c}{ $j$}\\
      & \multicolumn{1}{c}{} & \multicolumn{1}{c}{$C$}  & \multicolumn{1}{c}{$D$} \\\cline{3-4}
      \multirow{2}*{ $i$}  & $C$ & $\pi_{CC}$ & $\pi_{CD}$ \\\cline{3-4}
      & $D$  & $\pi_{DC}$ & $\pi_{DD}$ \\\cline{3-4}
    \end{tabular}
  \end{align}
where $\pi_{ab}$ denotes the payoff of entity $i$ under strategy $a$ when entity $j$ has chosen strategy $b$. As discussed, in our case the strategies $p_i^* = 1$ and $p_i^* = 0$ translate to being an unconditional cooperator ($C$), respectively defector ($D$), and the growth rates of the entities translate to the corresponding payoffs in the payoff matrix.  
This implies that unconditional cooperation is ESS if $\pi_{CC} > \pi_{DC}$. In interaction structures involving multiple entities, unconditional cooperation is ESS if the overall benefit of cooperation for an individual entity is greater than the benefit of defection, assuming that all other entities cooperate unconditionally. In other words, in the model with networked pooling and sharing, full cooperation will be ESS if and only if
\begin{align}
    g^{\mathcal{C}} > \max_i g^{\mathcal{D}}_i.
\end{align}

\subsection{Growth rate}

In a more general setting, unconditional cooperation may only be favored by a certain subset $\mathcal{C}_l$ of the population, whereas for the other part it is optimal to behave as unconditional defectors. This intermediate case can lead to the emergence of multiple connected components where each component has its own growth rate depending on the network parameters. Recall, a connected component is a subset of the network in which any two nodes (entities) are connected to each other by paths, and which is connected to no additional nodes.

To provide a better intuition on this behavior, in Fig.~\ref{fig:network-cases} we illustrate three situations which lead to different growth rate outcomes by considering a simple network composed of five entities and two pools. In particular, in Fig.~\ref{fig:network-cases}a we assume that unconditional cooperation is evolutionary stable for all entities, Therefore, in this case the growth rate of each individual entity converges to~(\ref{eq:cooperative-growth-rate}). On the other hand, in, Fig.~\ref{fig:network-cases}b we set the growth rate of entity $3$, which acts as bridge by being the only one to pool its resources in both pools, to be large enough so as this entity behaves as an unconditional defector. This creates two separate components $\mathcal{C}_1 = \left\{1,2\right\}$ and $\mathcal{C}_2 = \left\{4,5\right\}$, which pool and share resources only between themselves. Notice that the entities also pool part of their resources to entity $3$ but they are not shared back. This implies that the resource dynamics, and hence growth rates, between the components are independent. If the growth rates of each component are different, this leads to great discrepancies in the observed resources between the cooperators belonging to separate components, with the resources of the entities in one component being negligible in comparison to the resources of the entities in the other component. This is a distinguishing feature of our model and occurs solely as a result of the heterogeneous individual capabilities, in combination with the network structure. The observed effect is formally equivalent to the freezing state in the Random Energy model of Derrida which has been extensively studied in the statistical mechanics community~\cite{derrida1980random}.
Finally, in the last example, besides entity $3$ we also set the entities in component $\mathcal{C}_2$ to be defectors. Then, there is only one connected component and each unconditional cooperator will have the same steady state growth rate. Hence, in this circumstance there is no freezing state among the entities belonging to the unconditional cooperator set. 

\begin{figure}[t!]
\includegraphics[width=10.7cm]{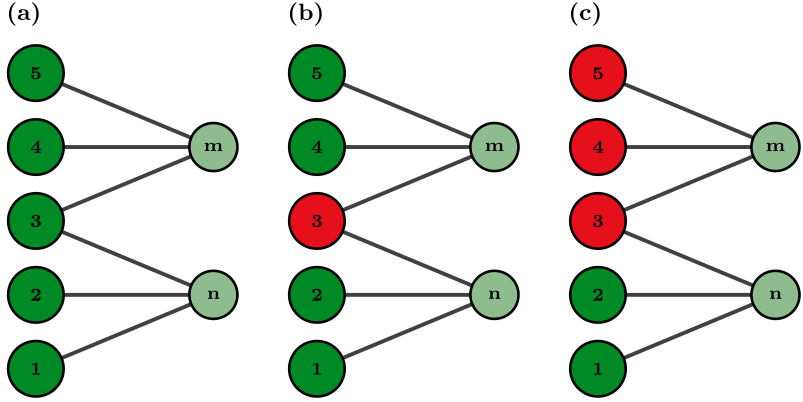}
\caption{\textbf{Creation of cooperative components.} \textbf{(a-c)} Network composed of five entities and two pools. Green nodes are the entities which prefer unconditional cooperation, whereas the red nodes depict entities which favor defection. \label{fig:network-cases}}
\end{figure}

To analyze the steady state growth rate $g_i^*$ of the entities in component $\mathcal{C}_l$ we utilize a mean-field approach together with It\^{o} calculus, as is done in~\cite{stojkoski2019cooperation}. The mean-field approach assures that each connected component of cooperators $\mathcal{C}_l$ will have a convergent growth rate given by the growth rate of the average $\langle y \rangle_{\mathcal{C}_l}$ constructed by the resource trajectories of the entities belonging to this component. 

To prove this claim, we define the rescaled resources of each cooperating entity $i$ as $\hat{y}_i(t) = \frac{y_i(t)}{\langle y \rangle_{\mathcal{C}_l}}$. Due to the normalization, in steady state, this observable also converges to a constant value $\hat{y}^*_i$ that is greater than zero and less then the number of unconditional cooperators. Hence, the growth rate of each entity $i$ in the set of unconditional cooperators can be written as
\begin{align*}
\lim_{t \to \infty} g_i( y_i(t), t) &= \lim_{t \to \infty} \frac{1}{t} \log\left(\frac{y_i(t)}{y_i(0)} \right) \\
&=  \lim_{t \to \infty} \frac{1}{t} \log \left( \langle y(t) \rangle_{\mathcal{C}_l} \cdot \hat{y}_i(t) \right) \\
&= \lim_{t \to \infty} \frac{1}{t} \log \left( \langle y(t) \rangle_{\mathcal{C}_l} \right) + \lim_{t \to \infty} \frac{1}{t} \log \left( \hat{y}_i(t) \right) \\
&= \lim_{t \to \infty} \frac{1}{t} \log \left(\langle y(t) \rangle_{\mathcal{C}_l} \right) \\
&\doteq \lim_{t \to \infty} g(\langle y(t) \rangle_{\mathcal{C}_l}, t).
\end{align*}

Consequently, one can use It\^{o}'s lemma to directly calculate the cooperative time-average growth rate. Formally, the lemma states that the differential of an arbitrary one-dimensional function $f(\mathbf{y},t)$ governed by an It\^{o} drift-diffusion process is given by
\begin{align}
\mathrm{d}f(\mathbf{y},t) &= \frac{\partial f}{\partial t} \mathrm{d}t + \sum_i \frac{\partial f}{\partial y_i}\mathrm{d}y_i + \frac{1}{2}\sum_i \sum_j \frac{\partial^2 f}{\partial y_i \partial y_j} \mathrm{d}y_i \mathrm{d}y_j.
\label{eq:ito-lemma}
\end{align}
In the case of $g(\langle y \rangle_{\mathcal{C}_l},t)$, we have that $f(t,\mathbf{y}) = \log(\langle y \rangle_{\mathcal{C}_l})$. Then, the first and second derivative of $f$ with respect to $y_i$ and $y_j$ are easily calculated as $\frac{\partial f}{\partial y_i}  = \frac{1}{N_{\mathcal{C}_l}} \frac{1}{\langle y \rangle_{\mathcal{C}_l}}$ and $\frac{\partial^2 f}{\partial y_i \partial y_j} = - \frac{1}{N^2_{\mathcal{C}_l}} \frac{1}{\langle y \rangle_{\mathcal{C}_l}^2}$, where $N_{\mathcal{C}_l}$ is the number of cooperators in steady state. Moreover, this transformation makes the differential $\mathrm{d}f(\mathbf{y},t)$ ergodic, and since we are looking at steady state averages, $\mathrm{d}y_i$ and $\mathrm{d}y_i \mathrm{d}y_j$ can be substituted with their expected values $\langle \mathrm{d}y_i \rangle$ and $\langle \mathrm{d}y_i \mathrm{d}y_j \rangle$. To estimate these expectations we utilize the independent Wiener increment property $\langle \mathrm{d}W_i^2 \rangle = \mathrm{d}t$, and define $z_j^{\left[\mathcal{C}_l\right]} = \sum_{k \in \mathcal{C}_l} A_{kj} $. Further, we omit terms of order $\mathrm{d}t^2$ as they are negligible. As a result, we obtain that
\begin{align*}
\langle \mathrm{d}y_i \rangle = \left[\sum_{k \in \mathcal{C}_l} A_{ik}(1+ \mu_k) y_k - y_i\right] \mathrm{d}t,  
\end{align*}
and,
\begin{align*}
 \langle \mathrm{d}y_i \mathrm{d}y_j \rangle = \sum_{k \in \mathcal{C}_l} A_{ik} A_{jk} \sigma^2_k y_k^2 \mathrm{d}t .    
\end{align*}
By inserting the estimates in Eq.~(\ref{eq:ito-lemma}) we can approximate the time-average growth rate as
\begin{align}
g(\langle y \rangle_{\mathcal{C}_l},t) &= \frac{1}{N_{\mathcal{C}_l}} \sum_{i \in N_{\mathcal{C}_l}} \hat{y}_i(t) \left[(1 + \mu_i) z_i^{\left[\mathcal{C}_l\right]} - 1 \right] - \frac{1}{2} \frac{1}{N_{\mathcal{C}_l}^2} \sum_{i \in \mathcal{C}_l} (z_i^{\left[\mathcal{C}_l\right]} \hat{y}_i(t) )^2 \sigma_i^2.
\label{eq:cooperative-growth-rate}
\end{align}
Eq.~(\ref{eq:cooperative-growth-rate}) describes the growth of $\langle y \rangle_{\mathcal{C}_l}$ as a function of $\hat{y}_i(t)$. To derive its steady state behavior, notice that we can remove the unconditional defectors from the cooperative dynamics, set $\Delta t = 1$, and rewrite Eq.~(\ref{eq:network-discrete}) as
\begin{align}
 y_i(t+1 ) \approx \sum_{j \in \mathcal{C}_l} A_{ij} y_j(t) \left[ 1 + \mu_j + \sigma_j \varepsilon_j(t) \right].
\label{eq:cooperative-dynamics}    
\end{align}
When Eq.~(\ref{eq:cooperative-dynamics}) is divided by the population average resources $\langle y(t+1) \rangle_{\mathcal{C}_l}$ and is written in matrix form, the steady state dynamics for the rescaled resources can be approximated as
\begin{align}
\mathbf{\hat{y}^*}  &= \lim_{t \to \infty} \mathbf{\hat{y}}(t) \approx \lim_{t \to \infty} \frac{\mathbf{A}_{\mathcal{C}_l} \mathbf{y}(t)}{\langle \mathbf{A}_{\mathcal{C}_l} \mathbf{y}(t)\rangle_{\mathcal{C}_l}},
\label{eq:markov-chain}
\end{align}
where $\mathbf{A}_{\mathcal{C}_l}$ is a reduced version of the transition matrix $\mathbf{A}$ in which includes only the rows and columns associated to the entities in the set $\mathcal{C}_l$. 
By the power method, this leads to
\begin{align}
 \hat{y}_i^* &= v^{\left[\mathcal{C}_l\right]}_i,
\label{eq:vi}
\end{align}
where $v^{\left[\mathcal{C}_l\right]}_i$ is the $i$-th element of the right-eigenvector of $\mathbf{A}_{\mathcal{C}_l}$ associated with the largest eigenvalue $\lambda_{\mathcal{C}_l}$  normalized in a way such that $\sum_i v^{\left[\mathcal{C}_l\right]}_i = N_{\mathcal{C}_l}$.

By inserting the estimates of Eq.~(\ref{eq:vi}) in Eq.~(\ref{eq:cooperative-growth-rate}) we get that the steady state cooperative growth rate is
\begin{align}
g_{\mathcal{C}_l} &= \left( \lambda_{\mathcal{C}_l} -1 \right) + \langle x^{\left[ \mathcal{C}_l\right]} \mu \rangle_{\mathcal{C}_l} - \frac{1}{2} \frac{1}{N_{\mathcal{C}_l}} \langle (x^{\left[\mathcal{C}_l\right]})^2 \sigma^2 \rangle_{\mathcal{C}_l},
\label{eq:steady-state-cooperative-growth-rate}
\end{align}
where $x_i^{\left[\mathcal{C}_l\right]} = v_i^{\left[\mathcal{C}_l\right]} z_i^{\left[\mathcal{C}_l\right]}$, is a network centrality index whose relation with the drifts and amplitudes ultimately determines the cooperative growth rate. 

\subsection{Determining ESS \label{sec:iterative-method}} 

To determine the ESS in situations when there is a possibility that unconditional defectors and cooperators may coexist, we consider an alternate-projection method~\cite{boyd2003alternating}. In particular, notice that the ESS conditions may be reformulated as 
\begin{align*}
\mathrm{g}^*_i &= g_{\mathcal{C}_l}, \hspace{0.05cm} i \in \mathcal{C}_l, \nonumber \\ 
0 &= \left(g_{\mathcal{C}_l} - g_i^{\mathcal{D}}\right) \left( 1 - \mathrm{p}^*_i \right), 
\end{align*} 
resulting in a nonlinear system of $2N$ equations with $2N$ variables ($g_i^*$ and $p_i^*$) in total. Hence, a simplified, iterative approach based on the alternate projection method can be used for finding the steady state solution. We follow~\cite{utkovski2017promoting} and summarize the method as follows:
\begin{enumerate}
\item Set $\mathrm{p}^*_i = 0$ and $g^*_i = g_i^{\mathcal{D}}$ for all $i$ not satisfying the condition~(\ref{eq:full-network-coop}). Set $\mathrm{p}^*_i = 1$ for the remaining entities. To estimate their growth rate solve Eq.~(\ref{eq:steady-state-cooperative-growth-rate}) for each component of unconditional cooperators $\mathcal{C}_l$ and set for each $i \in \mathcal{C}_l$, $g_i = g_{\mathcal{C}_l}$.

\item For all $i$ satisfying $\mathrm{g}^*_i < g_i^{\mathcal{D}}$ in the obtained solution, set $\mathrm{p}^*_i = 0$ and $\mathrm{g}^*_i = g_i^{\mathcal{D}}$. For each remaining component of cooperators, solve again the corresponding growth rate.
\item Repeat steps $1.$ and $2.$ until there are no $\mathrm{g}^*_i < g_i^{\mathcal{D}}$.
\end{enumerate}

\subsection{Simple examples}
To provide an illustrative representation for the evolutionary properties of the model in what follows we study three simple examples. The purpose of the first example is to show the evolutionary dynamics in the simplest population structure consisting of only two entities. The second example extends the population size to an arbitrary size of $N$ entities. The last example shows how the position of the entities in an ordinary network structure can impact the individual payoff.

\paragraph{Example 1:} The first situation that we consider is a replication of the example studied in Ref.~\cite{peters2015evolutionary}. Concretely, we assume an interaction structure of two entities $i$ and $j$ who have the option to share their resources through one pool $m$, as illustrated in the right panel of Fig.~\ref{fig:simple-example1}. 

\begin{figure}[t!]
\includegraphics[width=12cm]{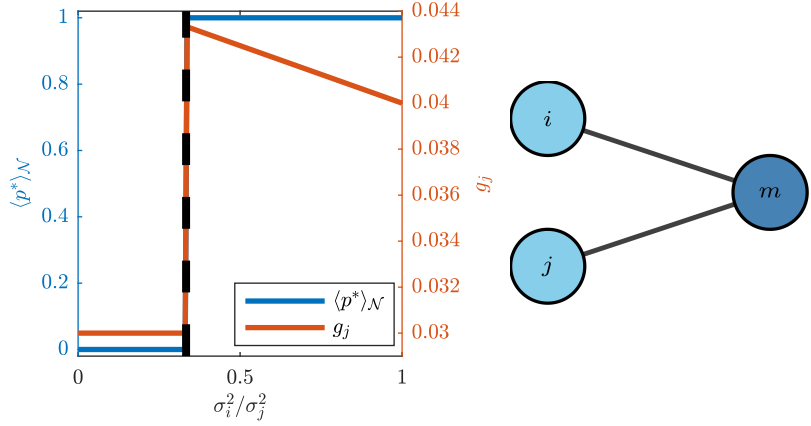}
\caption{\textbf{Example 1.} Left panel: average fraction of unconditional cooperators $\langle p^* \rangle_{\mathcal{N}}$ and steady state growth rate $g_j$ of entity $j$ as a function of the noise amplitude ratio $\sigma^2_i / \sigma^2_j$. The dashed line is the threshold after which cooperation is evolutionary stable. The right panel depicts the interaction structure. \label{fig:simple-example1}}
\end{figure}

For simplicity, we are going to assume that $i$ has a larger individual steady state growth rate, (i.e. $g^{\mathcal{D}}_i > g^{\mathcal{D}}_j$ and examine the situation from the point of view of entity $i$, In this case, the payoff matrix for $i$ reads
\begin{align}
    \setlength{\extrarowheight}{2pt}
    \begin{tabular}{cc|c|c|}
      & \multicolumn{1}{c}{} & \multicolumn{2}{c}{ $j$}\\
      & \multicolumn{1}{c}{} & \multicolumn{1}{c}{$C$}  & \multicolumn{1}{c}{$D$} \\\cline{3-4}
      \multirow{2}*{ $i$}  & $C$ & $\langle \mu \rangle_{\mathcal{N}} - \frac{\langle \sigma^2 \rangle_{\mathcal{N}}}{4}$ & 0 \\\cline{3-4}
      & $D$ & $ \mu_i - \frac{\sigma_i^2}{2}$ & $ \mu_i - \frac{\sigma_i^2}{2}$ \\\cline{3-4}
    \end{tabular}
\end{align}
which, after some reordering, implies that full unconditional cooperation is ESS if
\begin{align*}
    \mu_j - \frac{\sigma^2_j}{4} > \mu_i - \frac{3 \sigma^2_i}{4}.
\end{align*}
In the special case when $\mu_j = \mu_i$, the condition reduces to $3 \sigma_i > \sigma_j$. This result is displayed in Fig.~\ref{fig:simple-example1} where we plot the average steady state propensities for cooperation within the population and the steady state growth $g_j$ of entity $j$  as a function of the square of the amplitude ratio $\sigma_i / \sigma_j$. We observe that at the critical point at $\sigma^2_i / \sigma^2_j = 1/3$ the growth rate of entity $j$ is largest and afterwards it is decreasing linearly due to the increase in the noise amplitude $\sigma_i$ of entity $i$.

\paragraph{Example 2:} The second example extends the interaction to an arbitrary number of $N$ entities, as depicted in the right panel of Fig.~\ref{fig:simple-example2}. This is the well-mixed situation which has been extensively utilized for determining the performance of a particular mechanism in the evolution of cooperation. As such this allows us to link our model to previous findings.

\begin{figure}[t!]
\includegraphics[width=12cm]{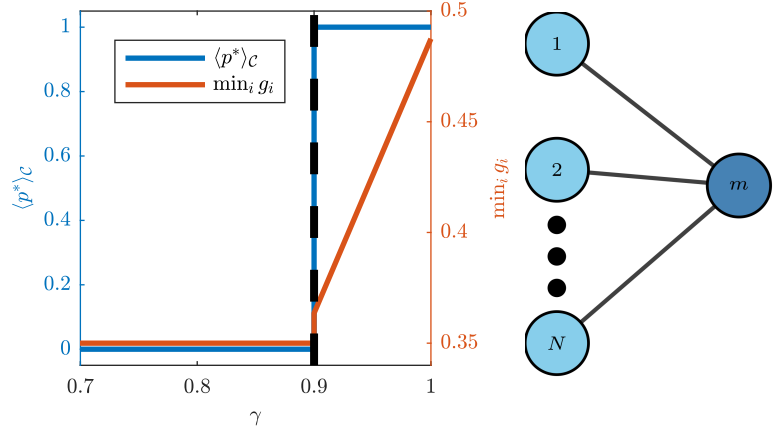}
\caption{\textbf{Example 2.} Left panel: average fraction of unconditional cooperators $\langle p^* \rangle_{\mathcal{C}}$ and minimum steady state growth rate for the entities in the unconditional cooperator set $\mathcal{C}$ as a function of the fraction of population belonging to that set $\gamma$. The dashed line is the threshold for cooperation to be ESS among the entities in $\mathcal{C}$, whereas the right panel describes the interactions. In this case $N = 12$, $\mu = 0.5$ and $\sigma^2 = 0.3$.  \label{fig:simple-example2}}
\end{figure}

To ease the analysis, we assume that a fraction of the population has an individual growth rate larger than $g^{\left[ \mathcal{N}\right]}$ and therefore the entities belonging to this group behave as unconditional defectors. The other fraction, $\gamma$, has an equal drift $\mu$ and noise amplitude $\sigma$. 

As a consequence, in this example there is only one component of possible unconditional cooperators $\mathcal{C}$ and for each entity $i$ in it we have that $v_i^{\mathcal{C}} = 1$ and $z_i^{\left[\mathcal{C}\right]} = \lambda_{\mathcal{C}} = \gamma $. Hence, the cooperative growth rate of this component is
\begin{align}
    g^{\mathcal{C}} &= \gamma - 1 + \gamma \mu - \gamma \frac{\sigma^2}{2N}.
\label{eq:well-mixed-growth}
\end{align}
Clearly, under evolutionary dynamics, unconditional cooperation by the entities in the potential cooperator set will be favored only if
\begin{align*}
    g^{\mathcal{C}} > \mu - \frac{\sigma^2}{2},
%\label{eq:well-mixed-growth-}
\end{align*}
which, when rearranged in terms of $\gamma$, yields
\begin{align}
\gamma &> \frac{1+\mu- \frac{\sigma^2}{2}}{1+ \mu- \frac{\sigma^2}{2N}}.
\label{eq:well-mixed-condition}
\end{align}

Fig.~\ref{fig:simple-example2} visualizes the dependence of the minimum steady state growth rate of the potential unconditional cooperators as a function of the fraction of potential cooperators $\gamma$. It is easily noticed that after condition~(\ref{eq:well-mixed-condition}) is satisfied, the growth rate of the entities in the set $\mathcal{C}$ increases, which means that cooperation is an ESS for them.

Eq.~(\ref{eq:well-mixed-condition}) is a direct  pooling and sharing counterpart to the evolutionary rules for cooperation that have been both numerically and analytically studied~\cite{Nowak-2006five}. In particular, notice that when there is no noise ($\sigma = 0$), then unconditional cooperation is never favored. As such, the derived inequality is directly related to the result in~\cite{stollmeier2018unfair} where the evolutionary dynamics in fluctuating environments were studied in a broader setting. The main difference is that our model allows us to easily infer the role of network structures in fluctuating environments as well as to study the presence of different behavioral update rules.

\paragraph{Example 3:} In the last example we once more examine a structure consisting of $N$ entities which now do not construct a well-mixed population and instead interact through $2$ pools, $m$ and $n$. A fixed number of entities $N-2$ interact solely through pool $m$ and one entity, $j$, interacts only through pool $n$. In addition, there is one entity, $i$, which connects the network by pooling its resources in both $m$ and $n$. This interaction structure is described in the right panel of Fig.~\ref{fig:simple-example3}.

\begin{figure}[t!]
\includegraphics[width=12cm]{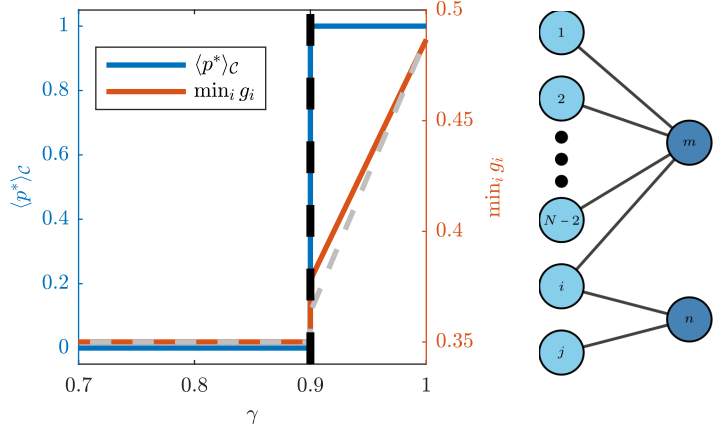}
\caption{\textbf{Example 3.} Left panel: average fraction of unconditional cooperators $\langle p^* \rangle_{\mathcal{C}}$ and minimum steady state growth rate for the entities in the cooperator set $\mathcal{C}$ as a function of the fraction of potential unconditional cooperators $\gamma$. The dashed black line is the threshold for cooperation to be ESS among the entities in $\mathcal{C}$, whereas the grey line in the background gives the growth rate for the corresponding well-mixed system. The right panel describes the interactions. In this example  $N = 12$, $\mu = 0.5$ and $\sigma^2 = 0.3$. \label{fig:simple-example3}}
\end{figure}

We assume that a fraction $1-\gamma$ of the individuals interacting solely through pool $m$ have a large enough growth rate to behave as an unconditional defectors, and thus study the circumstances under which the remaining $N_\mathcal{C} = \gamma(N-2) + 2$ entities behave as unconditional cooperators $\mathcal{C}$. In this case the interaction matrix reads
\begin{align*}
\mathbf{A}_{\mathcal{C}} &= 
\begin{bmatrix} 
\frac{1}{N-1} & \frac{1}{N-1}  & \dots & \frac{1}{N-1}  & \frac{1}{2(N-1)} & 0  \\ 
\vdots & \vdots & & \vdots & \vdots & \vdots \\
\frac{1}{N-1} & \frac{1}{N-1}  & \dots & \frac{1}{N-1}  & \frac{1}{2(N-1)} & 0 \\ 
\frac{1}{N-1} & \frac{1}{N-1}  & \dots & \frac{1}{N-1}  & \frac{1}{2(N-1)}+ \frac{1}{4}& \frac{1}{2}  \\ 
0 & 0  & \dots & 0  & \frac{1}{4}& \frac{1}{2} \\
\end{bmatrix} .
\end{align*}

Moreover, the largest eigenvalue of $\mathbf{A}_{\mathcal{C}}$ is
\begin{align}
    \lambda_{\mathcal{C}} &\approx \frac{|4\gamma -3| + 4 \gamma +3}{8},
    \label{eq:eigenvalue-3example}
\end{align}
with corresponding right-eigenvector entries $v^{\left[\mathcal{C}\right]}_k\approx \frac{(\gamma(N-2) + 2)w^{\left[\mathcal{C}\right]}_k}{\sum_l w^{\left[\mathcal{C}\right]}_l}$ approximated for large $N$ and a fixed $\gamma>0$ by 
\begin{equation}
w^{\left[\mathcal{C}\right]}_k =
\begin{cases}
\frac{|4\gamma -3|+ 4\gamma -1}{2},  & \text{if } k = i, \\\
1  & \text{if } k = j, \\\
\frac{ |4\gamma-3|(N-2)+4\gamma(N-4) + (11-3N)}{2 (N-2) }& \text{otherwise},
\end{cases}
\label{eq:v-3example}
\end{equation}
and $z^{\left[\mathcal{C}\right]}$ index entries
\begin{equation}
z^{\left[\mathcal{C}\right]}_k = 
\begin{cases}
\frac{\gamma(N-2) +1}{2(N-1)}+ \frac{1}{2} & \text{if}\ k=i, \\\
 1 & \text{if}\ k=j, \\\
\frac{\gamma(N-2)+1}{N-1},  & \text{otherwise}.
\end{cases}
\label{eq:z-3example}
\end{equation}
To illustrate the differences between this example and Example 2., we set each entity belonging to the potential set of unconditional cooperators an equal growth rate, $\mu_k = \mu$ and noise amplitude $\sigma_k = \sigma$.

Combining Eqs.~\eqref{eq:eigenvalue-3example},~\eqref{eq:v-3example} and~\eqref{eq:z-3example}, one can derive an analytical expression for the cooperative growth rate~\eqref{eq:cooperative-growth-rate}. However, the resulting mathematical expression is rather complicated. Instead of writing it down, we visualize it in Fig.~\ref{fig:simple-example3}. The figure shows the cooperative growth rate as a function of $\gamma$. In the same figure, with the light gray line we plot the cooperative growth rate of the corresponding well-mixed system (Example 2). We observe that once unconditional cooperation is achieved by the entities in $\mathcal{C}$, the presence of a network structure effectively increases the growth rate of each individual, as compared to the well-mixed situation. \textcolor{red}{Moreover, even if the population size increases indefinitely, i.e., if we take the limit as $N \to \infty$ of Eqs.~\eqref{eq:eigenvalue-3example},~\eqref{eq:v-3example} and~\eqref{eq:z-3example}, the cooperative growth rate will remain dependent on the network structure and the fraction of unconditional cooperators $\gamma$. This can be easily deduced from  Eq.~\eqref{eq:eigenvalue-3example}, as the equation is not dependent on the population size, but rather on the fraction of potential unconditional cooperators which interact solely through pool $m$ and the non-trivial interaction between entities $i$ and $j$.} This serves as an intuitive example that the interaction structure may have a disproportionate effect on the observed individual behavior.

\subsection{Random graphs}

In a more general, random graph structure, besides the size of the unconditional defector set, the positioning of the entities in the network also determines the ESS for each other entity. While Eq.~(\ref{eq:cooperative-growth-rate}) and the alternate projection method give the solution for the level of cooperation in any arbitrary situation, the extent to which different random graphs are able to support cooperation in the presence of defectors can not be deduced explicitly.

To provide an insight on this phenomena, we consider four different types of random graph models i) Random $d$-regular (RR) graph~\cite{bollobas2013modern}, ii) Erdos-Renyi (ER) random graph~\cite{erdos1960evolution}, iii) Watts-Strogatz (WS) random graph~\cite{watts1998collective}, and iv) Barabasi-Albert (BA) random scale-free network~\cite{barabasi1999emergence} and study their robustness in the presence of arbitrary positioned and/or number of defectors. In an RR graph each entity is characterized with the same degree $d$, whereas in an ER graph two entities share an edge with probability $d/N$. Both types of random graphs yield homogeneous degree distributions and low clustering coefficients. The WS random graph, on the other hand is an extension of the two aforementioned graphs which is able to capture higher levels of clustering. In fact, this random graph type lies in-between the ER and RR graph as its creation starts with an initial structure of an RR graph and then each edge is re-wired with a fixed probability. Finally, the BA graph is constructed through a preferential attachment mechanism for generating random graphs. As such it yields a power law degree distribution coupled with high clustering.

To assess the robustness of the types of random graphs we conduct the following experiment. We begin by generating a random graph through the typical algorithms that are used for this procedure. Afterwards, we choose a random fraction $1-\gamma$ of the population. We set the growth rate of the entities in it to be large enough so as they behave as unconditional defectors. We characterize the entities in the other set, the potential unconditional cooperators, with an equal drift $\mu$ and noise amplitude $\sigma$. Under these circumstances, we estimate the steady state growth rates of each entity $i$ belonging to the cooperative cluster $\mathcal{C}_l$ as
\begin{align}
g^{\mathcal{C}_l} &= \left( \lambda_{\mathcal{C}_l} -1 \right) + \mu  \langle x^{\left[ \mathcal{C}_l\right]} \rangle_{\mathcal{C}_l}  - \frac{\sigma^2}{2 N_{\mathcal{C}_l}} \langle (x^{\left[\mathcal{C}_l\right]})^2 \rangle_{\mathcal{C}_l} \nonumber \\
&= \left(\mu  + 1 \right) \lambda_{\mathcal{C}_l} - \frac{\sigma^2}{2 N_{\mathcal{C}_l}} \langle (x^{\left[\mathcal{C}_l\right]})^2 \rangle_{\mathcal{C}_l} - 1,
\label{eq:steady-state-cooperative-growth-rate-randomg-graphs}
\end{align}
using the the procedure described in Section~\ref{sec:iterative-method}. We gather the average number of unconditional cooperators among the entities in the potential unconditional cooperator set, $\langle p^* \rangle_{\mathcal{C}}$. To get the typical behavior of a particular random graph we average across random graph instances and across defector selection samples.

Numerical results are summarized in Fig.~\ref{fig:complex-networks}a where we plot the average $p^*_i$ among the entities in the cooperator set as a function of $\gamma$. We observe that for every $\gamma$ the RR graph presents itself as the most supportive for cooperation as it requires the lowest amount of entities in the unconditional cooperator set for cooperation to exist. It is followed by the WS graph, while under this criterion the ER and BA graphs have similar performance and are the least robust random graphs. 

The reason for this behavior becomes apparent if we look at the evolution of $\lambda_{\mathcal{C}}$ and $\langle (x^{\left[\mathcal{C}_l\right]})^2 \rangle_{\mathcal{C}_l}$ as a function of the fraction of entities belonging in the potential unconditional cooperator set. Since $\mu > \sigma^2$, $\lambda_{\mathcal{C}}$ is decisive in the determination for the level of cooperation, the networks with a larger eigenvalue will be better promoters of cooperation. In this aspect, we observe in Fig.~\ref{fig:complex-networks}b that the RR graph always has the largest eigenvalue, followed by the WS graph. As $\gamma$ increases, the differences in $\langle (x^{\left[\mathcal{C}_l\right]})^2 \rangle_{\mathcal{C}_l}$ become apparent, whereas $\lambda_{\mathcal{C}}$ slowly converges to the same value for each network since in each random graph case as it becomes a stochastic matrix. Because $\langle (x^{\left[\mathcal{C}_l\right]})^2 \rangle_{\mathcal{C}_l}$ is related to the amplitude of the noise, the networks with lower value will be better promoters of cooperation. As depicted in Fig.~\ref{fig:complex-networks}c, the RR graph again is the graph with the largest magnitude of the observable, but its eigenvalue is already converged to the maximum value. Thus it is less affected by the properties of  $\langle (x^{\left[\mathcal{C}_l\right]})^2 \rangle_{\mathcal{C}_l}$. We arrive at similar conclusions when inspecting the properties of the other graphs. We hereby note that in a structure where the noise amplitude has a larger magnitude,  $\langle (x^{\left[\mathcal{C}_l\right]})^2 \rangle_{\mathcal{C}_l}$ will play a bigger role in determining the steady state cooperative behavior. 

\begin{figure*}[ht!]
\includegraphics[width=\textwidth]{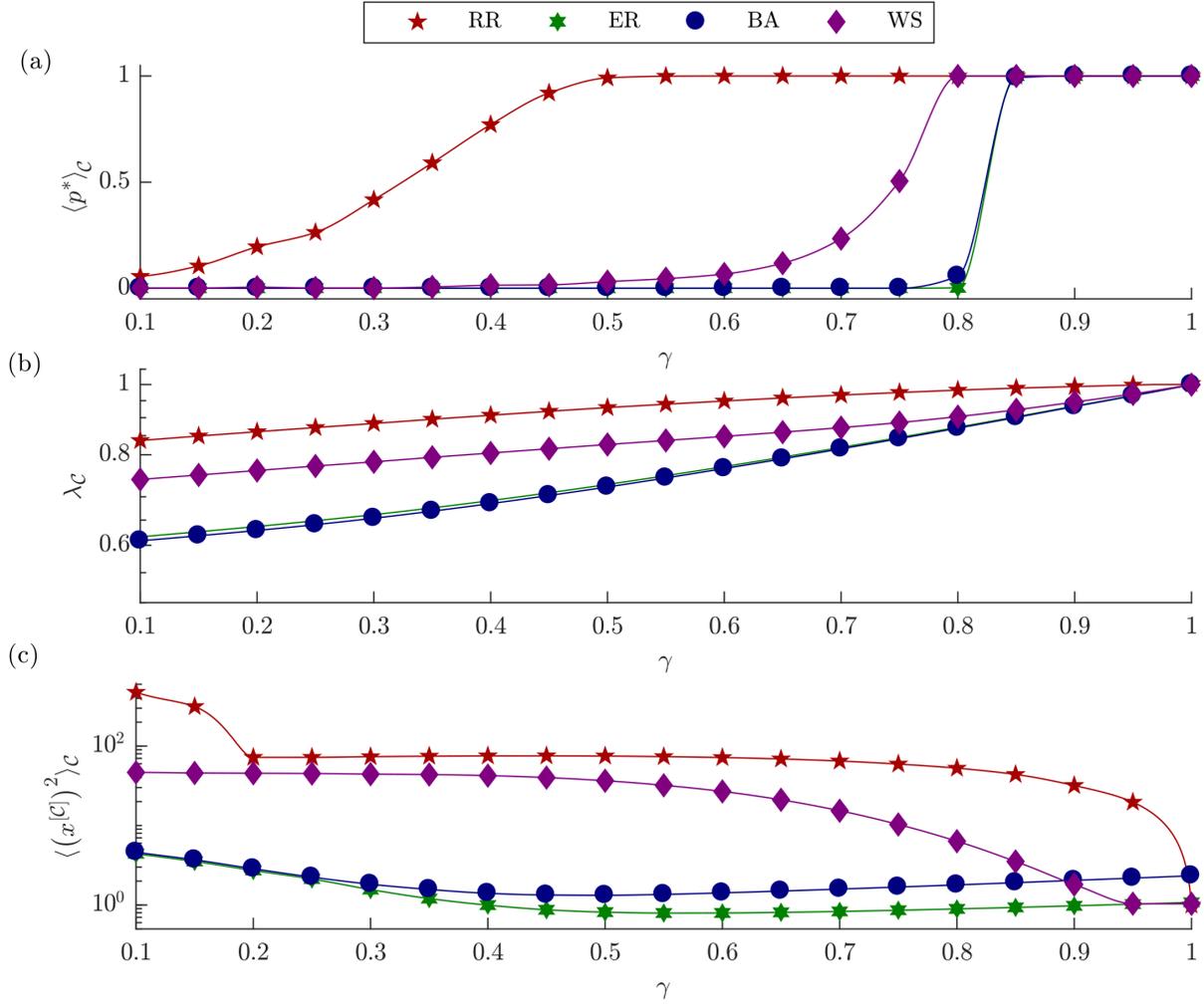}
\caption{\textbf{The role of complex networks in heterogeneous populations.} \textbf{(a)} Average steady state propensity for cooperation $\langle p^* \rangle_{\mathcal{C}}$ as a function of the fraction of potential unconditional cooperators $\gamma$ for RR, ER BA and WS graphs. \textbf{(b)} Average largest eigenvalue $\lambda_\mathcal{C}$ of the  matrix $A_\mathcal{C}$ as a function of $\gamma$ for the same networks. \textbf{(c)} Average $\langle (x^{\left[\mathcal{C}_l\right]})^2 \rangle_{\mathcal{C}_l}$ as a function of $\gamma$. In the simulation $\mu = 0.25$ and $\sigma^2 = 0.3$. The results are averaged across $100$ different unconditional defector choices and $100$ graph realizations with each graph having $2000$ nodes and an average degree of $4$. \label{fig:complex-networks}}
\end{figure*}

\textcolor{red}{To provide robustness of the network properties that we discussed in terms of the relationship between $\lambda_{\mathcal{C}}$ and $\langle (x^{\left[\mathcal{C}_l\right]})^2 \rangle_{\mathcal{C}_l}$ and $\gamma$, in Fig~\ref{fig:std-complex-networks} we provide the corresponding standard deviations, $S_{\lambda_{\mathcal{C}}}$ and $S_{\langle (x^{\left[\mathcal{C}_l\right]})^2 \rangle_{\mathcal{C}_l}}$ of the numerical results as a function of the fraction of entities belonging in the potential unconditional cooperator set. They have a significantly smaller magnitude than the averages shown in Fig.~\ref{fig:complex-networks}, thus indicating that the averaged results represent an adequate representation for the typical behavior of the graphs.}

\begin{figure*}[h]
\includegraphics[width=8.7cm]{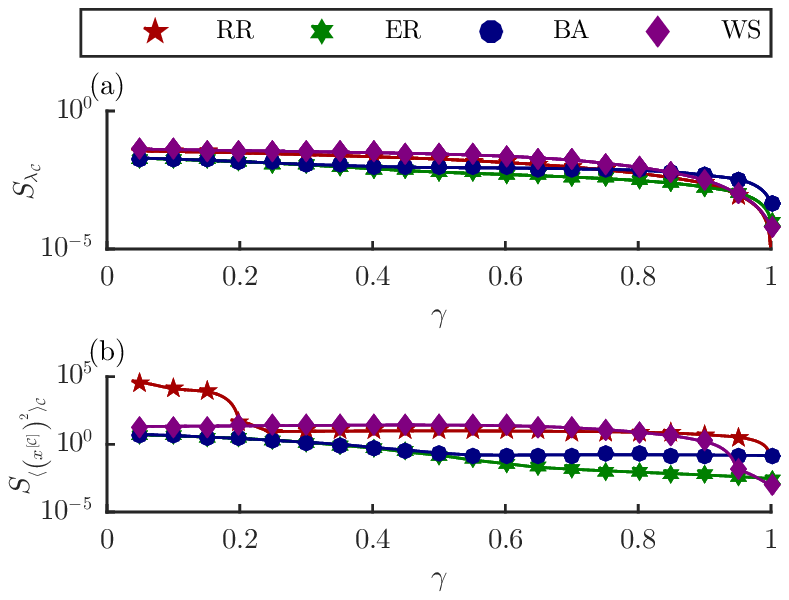}
\caption{\textbf{Standard deviation of the network properties.}  \textbf{(a)} Standard deviation $S_{\lambda_{\mathcal{C}}}$ of the largest eigenvalue $\lambda_\mathcal{C}$ of the matrix $A_\mathcal{C}$ as a function of $\gamma$ for the same networks. \textbf{(c)} Standard deviation $S_{\langle (x^{\left[\mathcal{C}_l\right]})^2 \rangle_{\mathcal{C}_l}}$ of $\langle (x^{\left[\mathcal{C}_l\right]})^2 \rangle_{\mathcal{C}_l}$ as a function of $\gamma$. In the simulation $\mu = 0.25$ and $\sigma^2 = 0.3$. The results are averaged across $100$ different unconditional defector choices and $100$ graph realizations with each graph having $2000$ nodes and an average degree of $4$. \label{fig:std-complex-networks}}
\end{figure*}

\section{Generalized reciprocity}
\label{sec:generalized-reciprocity}

So far, we addressed evolutionary behavior in which pooling and sharing is the sole cooperative mechanism. We showed that, as a consequence of the non-ergodicity, if certain conditions are satisfied, unconditional cooperation can be evolutionary stable even without the presence of any additional decision making mechanisms. In fact, the presence of additional auxiliary mechanisms should yield dynamics that at least complement the evolution of cooperation. To provide an initial insight on the role of these mechanisms, here we examine the cooperative behavior in the presence of a state-based generalized reciprocity update rule.

Generalized reciprocity suggests that cooperation can emerge as a consequence of previous positive experience with not necessarily the same group of opponents, i.e. it is based on the rule of ``help anyone if helped by someone''. This is significantly different from the two main forms of reciprocity, direct and indirect, which explain the emergence of cooperation either as a result of repeated encounters between the same group of entities or as an attempt to build positive reputation for future interactions~\cite{axelrod1981evolution,alexander2017biology}.

The main presumption which favors generalized reciprocity over other reciprocal mechanisms is that individual entities following this rule can be said to exhibit a simple state-based behavior. Due to this straightforward behavioral update rule, generalized reciprocity has been observed in a wide range of animal and human societies each manifesting different level of cognitive prowess and interacting in various environments~\cite{Rutte-2007,Leimgruber-2014,Gfrerer-2017,Bartlett-2006,Stanca-2009}.

While the extent to which generalized reciprocity is able to evolve as a sole cooperation mechanism has been a subject to an active debate, recent studies have shown that once this mechanism is present in a system, it induces dynamics which assist the stability of cooperation~\cite{pfeiffer2005evolution,Rankin-2009,vanDoorn-2012,nowak2006upstream,ito2019evolution}. Concretely, in~\cite{utkovski2017promoting,stojkoski2018role} it was argued that state-based generalized reciprocity effectively prevents the individual entities from being exploited by unconditional defectors, whereas in~\cite{stojkoski2018cooperation} it was suggested that this prevention is accompanied with maximization of the level of cooperation displayed by each entity. Nevertheless, the theoretical work done so far has assumed that the entities interact in an additive environment where ensemble averages are a good approximation for the stochastic behavior. In this section we extend the rule to account for possible resource dynamics driven by a multiplicative noise.

\subsection{Behavioral update}

To study the individual behavior under a simple generalized reciprocity rule we consider an update based on the entity's estimate for its growth rate
\begin{align}
    p_i\left(t+\Delta t\right) &= f_{i,t}\left[g_i(y_i(t),t)\right],
    \label{eq:update-rule}
\end{align}
where $f_{i,t}: \mathbb{R} \to \left[ 0,1 \right]$ is monotonic (nondecreasing). A plausible choice would be the sigmoid (logistic) function
\begin{align}
    f_{i,t}(\omega) &= \left[ 1 + \exp{(-\kappa_i(t))(\omega - \omega_i)} \right]^{-1},
    \label{eq:logistic-function}
\end{align}
where the midpoint $\omega_i$ is given by the steady state of $g_i(y_i(t),t)$ without pooling, i.e., $g^{\mathcal{D}}_i$. 

We remark that we purposefully allow for the steepness of the sigmoid function to be an unbounded monotonically increasing function of $t$, so as to account for the time-dependence in the variance of $g_i(y_i(t),t)$. For simplicity, we focus on the special case when
$\kappa(t) = t^{\alpha} $ where $\alpha$ is a positive parameter that captures the learning rate of the entities. In particular, $\alpha < 1$ corresponds to convergence towards equilibrium propensities to cooperate at a rate lower than the elapsed time $t$, whereas $\alpha > 1$ provides the opposite dynamics. The variable $\kappa(t)$ in our model is directly related to the intensity of choice parameter used in standard Choice theory (it is the analogue of the inverse temperature in statistical mechanics)~\cite{moran2020force}.

We point out that the introduced rule provides a simple description for the cooperative behavior in a wide range of interaction structures. The advantage of the behavioral update lies in its simplicity since~(\ref{eq:update-rule}) implies that an entity only has to know its current amount of resources in order to determine the next action. This is significantly different from other behavioral update rules where entities are required to optimize over domains depending on both opponent behavior and possible future interactions~\cite{stojkoski2018role}.

\subsection{Model properties}
\label{model-properties}

In the analysis performed in Section~\ref{sec:evolutionary-stability} we always ended up with dynamics that have a steady state. However, the introduction of the state-based generalized reciprocity rule may in fact disturb this property of the model and we may end up with complex dynamics whose study is out of the scope of the paper.

To greatly ease the analysis of generalized reciprocity here we consider two situations. In the first situation, we examine the individual cooperative behavior under the assumption that there is a steady state growth rate $g^*_i$ for every entity $i$. In the second situation, we study the properties of the model numerically and derive an analytical solution for the model in the circumstance when $\alpha <1$ since, as it will be shown, there is always a steady state.

\paragraph{Cooperative behavior:}
Let us assume that there is a steady state cooperative behavior $p^*_i$ and growth rate $g^*_i$ for each entity $i$ conforming to the rule~(\ref{eq:update-rule}). In addition, we assume that the growth rate of each entity is set such that the system is non-degenerate. By non-degenerate we mean that $g^\mathcal{D}_i \neq g^{\mathcal{C}}$ for all $i$. Then, we can derive the following properties of the model

\begin{enumerate}
\item[\textit{i.}] \textit{Prevention of exploitation:} -- It can be easily shown that $g^*_i < g^{\mathcal{D}}_i$ is an impossible situation for any entity following the behavioral update rule. To see this, assume that entity $i$ follows~(\ref{eq:update-rule}) and has $g^*_i < g^{\mathcal{D}}_i$. Then, the behavioral update rule indicates that the steady state propensity for cooperation of this entity is $p^*_i = 0$. Subsequently, this implies that 
\begin{align}
    y_i(t+\Delta t) \geq  y_i(t) \left[ 1 + \mu_i \Delta t + \sigma_i \varepsilon_i(t) \sqrt{\Delta t} \right].
    \label{eq:defector-inequality}
\end{align}
Taking the limit as $t \to \infty$ we get that $g^*_i \geq g^{\mathcal{D}}_i$, thus contradicting our initial assumption. This is a favorable property of state-based generalized reciprocity which has been also observed in additive dynamics~\cite{utkovski2017promoting}. It significantly differs from other forms of generalized reciprocity since it has been shown that they are prone to exploitation~\cite{nowak2006upstream}.
\item[\textit{ii.}]  \textit{Sufficient condition for existence of unconditional cooperators:} -- The behavioral update rule coupled with the monotonicity and unboundedness of $\kappa_i(t)$ imply that if $g^*_i > g^{\mathcal{D}}_i$ then $p^*_i = 1$. Moreover, it follows that a necessary condition for $p^*_i < 1$ is $g^*_i = g^{\mathcal{D}}_i$.
    
\end{enumerate}

Altogether, in steady state the entities may thus be attributed to two (disjoint) sets, $\mathcal{D}_{gen} = \left\{ d \in \mathcal{N} : p^*_d<1  \right\}$ and $\mathcal{C}_{gen} = \left\{ c \in \mathcal{N} : p^*_c=1  \right\}$, depending on the steady state propensity for cooperation $p^*_i$. The entities in $\mathcal{D}_{gen}$ are further characterized by $g^*_d=g^{\mathcal{D}}_d$, while the entities $y_c$ in $\mathcal{C}_{gen}$ have an unknown growth rate $g_c \geq g^{\mathcal{D}}_c$, which is dependent on the network parameters. We will refer to the entities in the sets $\mathcal{D}_{gen}$ and $\mathcal{C}_{gen}$ as ``generalized defectors'', respectively ``generalized cooperators'', with an intention to emphasize their role in the pooling and sharing mechanism.

\paragraph{Numerical experiment:}
The behavior of the individual growth rates ultimately depends on the magnitude of $\alpha$. 
This parameter indicates the speed at which every entity reacts to the environment: if an entity has a smaller-than-expected growth, it will converge towards unconditional defection faster (proportionally with $\alpha$), and otherwise towards unconditional cooperators. Even if the fraction of resources that is pooled and shared is small, the amount may be non-negligible for some of the entities which behave as unconditional defectors in evolutionary sense and, in fact, it may significantly increase their growth rate, thus making them ``generalized cooperators''. In other words, while the rule asserts that the entities which experience lower growth than their own eventually end up as generalized defectors, the number of generalized defectors may in general be smaller than the number of defectors inferred through evolutionary stability analysis.

To illustrate this effect we make use of the network given in Fig.~\ref{fig:network-cases}. Concretely, we initialize the drift and amplitudes of the entities so as under evolutionary analysis $p^*_3 = 0$ and cooperation is unstable for every other entity. Then, we
simulate the coupled dynamics of Eq.~(\ref{eq:network-discrete}) and Eq.~(\ref{eq:update-rule}) for $2 \times 10^4$ time steps and record the propensities for cooperation among the entities in the potential generalized cooperator set at the last point in time.

Fig.~\ref{fig:generalized-reciprocity}a provides a boxplot for the propensities for cooperation for each entity. In general, we observe two different regimes depending on $\alpha$. The first appears when $\alpha < 1$. In this case, the propensity to cooperate $p_i$ for each entity $i$ appears to be at a value less than $1$ but larger than $0$. As $\alpha$ increases, the propensities to cooperate for each $i \neq 3$ also increase, eventually converging to $p_i = 1$. In contrast, the propensity of entity $3$ decreases and converges to $p_3 = 0$. We assert that for the lower values of $\alpha$ the system is not in steady state and is eventually converging to the steady state where $p^*_i = 1$ for all $i \neq 3$, $p_3^* = 0$ and $g^*_i = g_3^{\mathcal{D}}$. This is evidenced in Fig.~\ref{fig:generalized-reciprocity}b where we plot the numerically observed growth rate for each entity as a function of $\alpha$. Concretely, we notice that the observed growth rate of each $i \neq 3$ is larger than their defector growth rate, thus implying that $p^*_i= 1$. In contrast, the growth rate of $3$ is smaller than its defector growth rate. As a consequence, the propensity to cooperate of this entity must be converging to $0$. This slow convergence is illustrated in Fig.~\ref{fig:generalized-reciprocity}c where we visualize the dynamics over time of each $p_i(t)$ for various $\alpha$.

The second regime occurs when $\alpha > 1$. In this situation we observe that the propensities for cooperation for each entity $i$ vary from one simulation to another. One explanation for this phenomena is the fact that when $\alpha \geq 1$ the steepness of the logistic curve diverges with a rate that is at least as fast as the divergence of the resources of each entity. This reduces the domain of all $p_i(t)$ to $\left\{ 0, 1\right\}$ faster. Since growth rate of each entity is highly stochastic we may observe non-equilibrium dynamics even though we numerically find that on average the growth rate of each potential generalized cooperator is larger than expected (Fig.~\ref{fig:generalized-reciprocity}b).

\begin{figure*}[t!]
\includegraphics[width=\textwidth]{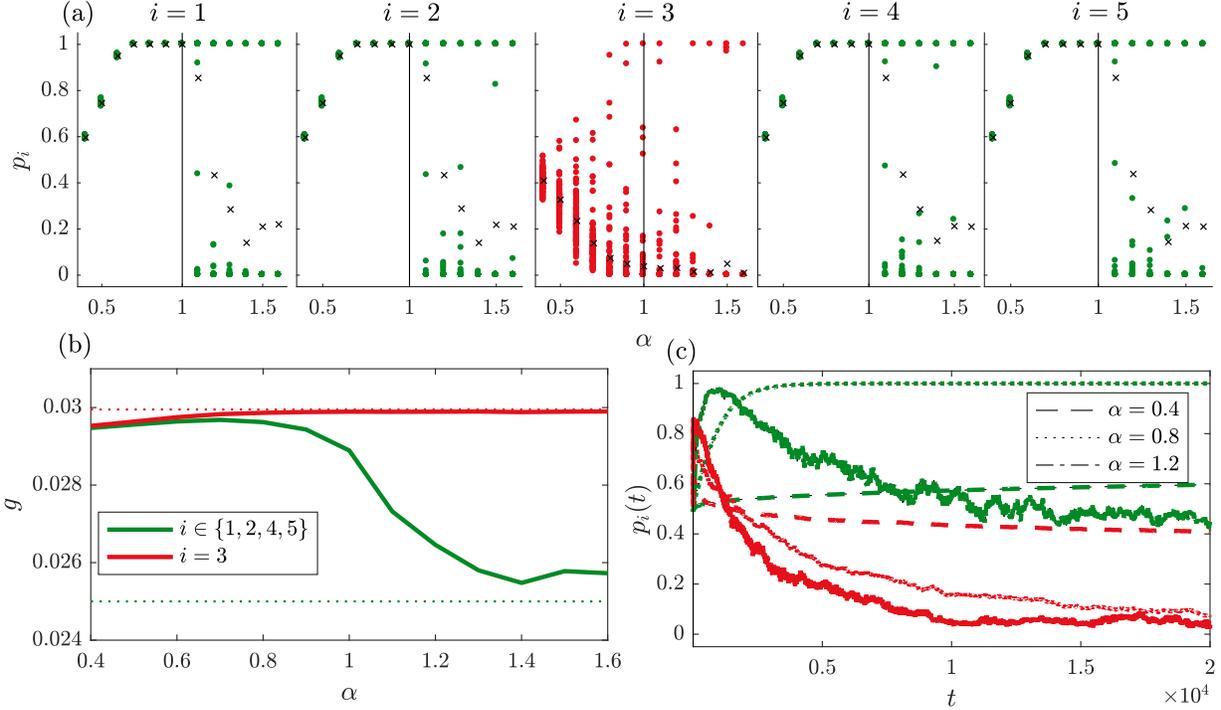}
\caption{\textbf{State-based generalized reciprocity in fluctuating environments.} \textbf{(a)} Observed propensities for cooperation $p_i(t)$ for each entity at $t = 2 \times 10^4$ over $100$ realizations. The circles are the results from each realization, whereas the crosses are the estimated averages across realizations for a given $\alpha$. \textbf{(b)} Numerically observed growth rate averaged across realizations. \textbf{(c)} Dynamics over time for the average $p_i(t)$ across realizations. The green lines are the averages for entities $i \in \{1,2,4,5 \}$ whereas the red lines are the averages for $i = 3$. \textbf{(a-c)} The parameters are set to $\mu_i = 0.03$ for all $i$, $\sigma = \sqrt{0.01}$ for $i \neq 3$ and $\sigma = 0.01$ for $i = 3$.
\label{fig:generalized-reciprocity}}
\end{figure*}

\paragraph{Analytical solution:}\label{Analytical-solution}The general case of $\alpha < 1$ in any interaction structure can be easily analytically solved and thus the observations explained. To see this, let $i$ represent the strongest entity in the network in the sense that $g_i^\mathcal{D} > g^\mathcal{C}$ and $g_i^\mathcal{D} > g_j^\mathcal{ D} $ for all $j \neq i$. 
We claim that the limit of the growth of each entity is at least as large as $g_i^{\mathcal{D}}$. 
For every finite time $t$ we approximate the growth rate $g_j(t)$ of each entity $j \neq i$ and notice that this quantity is bounded from below by the the growth induced by the shared resources from the strongest entity. Since the defection is slow enough, these shared resources are enough to bring the growth of each $j \neq i$ to $g_i^{\mathcal{D}}$. The formal proof is given in the Appendix. 

The previous discussion leads to the situation where each entity exhibits a growth rate equal to the growth rate of the strongest entity $g_i^{\mathcal{D}}$. In turn, from the cooperative behavior properties \textit{i.} and \textit{ii.} we infer that $p^*_j = 1$ for all $i \neq j$ and $p^*_i < 1$.

\section{Conclusion}

We investigated the evolution of cooperation in networked heterogeneous fluctuating environments. We found out that the fluctuations induce evolutionary behavior which may lead to emergence of components within the population structure. The properties of each component are characterized solely by the interaction topology and the traits possessed by the entities belonging in it. Thus, we may observe great disparities in the owned resources between cooperating entities belonging to different components.

Moreover, by introducing a simple behavioral update rule we showed that state-based generalized reciprocity enhances the promotion of cooperation in fluctuating environments. The rule induces dynamics under which each entity is prevented from exploitation, i.e., each $i$ observes a growth rate that is at least the same size as its own growth when behaving as an unconditional defector. More importantly, by construction, the entities whose observed growth rates at time $t$ that are greater than in the unconditional defector situation, also display greater propensities to cooperate. For the regime in which the learning rate is slower than the temporal dynamics, we analytically derived the exact behavior of each entity. When the learning rate is faster than the temporal dynamics we numerically observed complex dynamics. We argue that the observed dynamics is a consequence of the noise generating process having a larger impact on the evolutionary result and may produce situations with no stable state. In these situations, there is a never ending cycle in which the behavior of each entity alternates between unconditional cooperation and unconditional defection. While a detailed study of the dynamics is required, we believe that its implications are non-trivial and may contribute in explaining events such as the co-evolution of life cycles and multicellularity~\cite{hammerschmidt2014life,rainey2010cheats,pichugin2015modes}.

The implementation of our results goes beyond explaining the evolution of cooperation. In particular, the introduced rule is directly related to the concept of novelty search where individual entities decide their next actions on the basis of previous experience~\cite{lehman2008exploiting}. Novelty search is omnipresent in reinforcement learning and has been utilized in developing machines that efficiently mimic human behavior. In this aspect, we argue that the results discovered here behave as a building block in constructing machines which interact in a fluctuating environment.

\section{Appendix}
\label{appendix}
Here we provide the mathematical details of our claim in Section~\ref{model-properties} c. Analytical solution.
\begin{theorem} Let $y_k(t)$ be a solution to Eq.~(\ref{eq:network-differential}) with $\alpha <1$ and assume that $g_k^*$ exists for all $k$. In addition, assume that $i$ represents the strongest entity in the sense that $g_i^\mathcal{D}> g^\mathcal{C}$ and $g_i^\mathcal{D} > g_j^\mathcal{D}$ for all $j \neq i$. Then $g_k^* = g_i^\mathcal{D}$ for all $k$. Moreover, $p_j^*=1$ for $j \neq i$ and $p_i^* <1$. 
\end{theorem}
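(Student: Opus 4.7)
The plan is to combine the two cooperative-behavior properties established in Section~\ref{model-properties} (prevention of exploitation and the sufficient condition for unconditional cooperators) with a flow-based lower bound on the resources emanating from the strongest entity $i$, and then to close the argument by a self-consistency check that rules out $p_i^*=1$. I would break the proof into four steps.

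First, I would establish that $g_j^* \geq g_i^{\mathcal{D}}$ for every $j\neq i$. The idea is to lower-bound $y_j(t)$ by the resources that flow into $j$ directly from $i$ (or iteratively through a path in the connected bipartite graph). By the definition of the dynamics in Eq.~(\ref{eq:network-differential}), the portion of $y_j$ received from $i$ grows according to the parameters $(\mu_i,\sigma_i)$ of the sender, so its pathwise exponential rate is exactly $g_i^{\mathcal{D}} = \mu_i - \sigma_i^2/2$. The hypothesis $\alpha<1$ enters in a crucial way through the update rule~(\ref{eq:update-rule})--(\ref{eq:logistic-function}): the sigmoid steepness $\kappa_i(t)=t^{\alpha}$ grows only sublinearly, while the fluctuations of $g_i(y_i(t),t)$ around its expected value are of order $t^{-1/2}$, so the exponent $\kappa_i(t)(g_i(y_i(t),t)-g_i^{\mathcal{D}})$ cannot drive $p_i(t)$ to zero faster than sub-exponentially in $t$. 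Hence $p_i(t)y_i(t)$ still grows at rate $g_i^{\mathcal{D}}$, which propagates the same lower bound to every $y_j(t)$.

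Second, combining this bound with prevention of exploitation (property~\emph{i.}), one gets
\[
g_j^* \;\geq\; \max\bigl(g_i^{\mathcal{D}},\,g_j^{\mathcal{D}}\bigr) \;=\; g_i^{\mathcal{D}} \;>\; g_j^{\mathcal{D}} \qquad \text{for all } j\neq i,
\]
so the sufficient-condition property~\emph{ii.} forces $p_j^*=1$. To rule out $p_i^*=1$, I would argue by contradiction: if it held, every entity would cooperate unconditionally and the whole population would form a single cooperating component with growth rate $g^{\mathcal{C}}$, which contradicts $g_i^*\geq g_i^{\mathcal{D}}>g^{\mathcal{C}}$ from the hypotheses and property~\emph{i.}. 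Therefore $p_i^*<1$, and the contrapositive of property~\emph{ii.} gives $g_i^*=g_i^{\mathcal{D}}$.

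Third, to pin down $g_j^*=g_i^{\mathcal{D}}$ for $j\neq i$ (not just $\geq$), I would note that since $p_i^*<1$ but $p_i(t)$ remains asymptotically positive (by the same slow-decay argument as in Step~1), entity $i$ keeps feeding the rest of the network, so in steady state the ratios $y_j(t)/y_i(t)$ remain bounded and bounded away from zero for every reachable $j$. Consequently all entities share a common asymptotic growth rate, which by the previous step equals $g_i^{\mathcal{D}}$.

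The main obstacle will be making Step~1 fully rigorous, namely quantifying the decay of $p_i(t)$ under a sigmoid whose steepness $t^{\alpha}$ grows slower than the $\sqrt{t}$-scale fluctuations of the observed growth rate, and then translating this into a pathwise statement (or a statement holding almost surely / in expectation) that the transferred resources $p_i(t)y_i(t)$ truly grow at rate $g_i^{\mathcal{D}}$. The other steps are essentially consequences of the already-established properties \emph{i.} and \emph{ii.} and the hypothesis $g_i^{\mathcal{D}}>g^{\mathcal{C}}$, so the heart of the argument lies in controlling the interplay between the sublinear learning rate and the multiplicative noise.
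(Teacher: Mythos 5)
Your plan follows essentially the same route as the paper's proof: a flow lower bound coming from the strongest entity, the observation that for $\alpha<1$ the sigmoid can only suppress $p_i(t)$ sub-exponentially (the paper derives $p_i(t)\ge 1/(\xi+\exp(\eta t^{\alpha}))$ via a Gronwall--Bellman bound on $g_i$), propagation through the connected graph, and then properties \emph{i.} and \emph{ii.} together with $g_i^{\mathcal{D}}>g^{\mathcal{C}}$ to force $g_i^*=g_i^{\mathcal{D}}$ and $p_i^*<1$. The one wrinkle is that the inflow from $i$ alone is not by itself a valid lower bound for $y_j$ (entity $j$ simultaneously pools out $p_j y_j$); the paper resolves this by arguing by contradiction from $g_j^*<g_i^{\mathcal{D}}$, so that the net term $A_{ji}p_i(t)y_i(t)-p_j(t)y_j(t)$ is eventually positive and Jensen's inequality plus dominated convergence can be applied.
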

\begin{proof}
We argue by contradiction. Namely, let $g_j^* < g_i^\mathcal{D}$, and let entities $i$ and $j$ take part in a common pool, i.e. $A_{ji}>0$ and $A_{ij}>0$.
For any $T>T_0>0$  the solution to~\eqref{eq:network-differential} can be written as
\begin{align*}
     y_j(T) &= \int_{T_0}^T A_{ji}p_i(t)y_i(t)\mathrm{d}t - p_j(t)y_j(t)\mathrm{d}t 
     + \left(\int_{T_0}^T G_{T,T_0}(t)\right),
\end{align*}
where $G_T(t)$ is given as
    \begin{align*}
     G_{T,T_0}(t) &= \frac{y_j(T_0)}{T-T_0} + \sum_{k \neq i} A_{jk}p_k(t)y_k(t)(\mathrm{d}t+\mu_k \mathrm{d}t + \sigma_k \mathrm{d}W_k )  + \\ 
     &+A_{ji}p_i(t)y_i(t)(\mathrm{d}t+\hat{\mu}_i \mathrm{d}t + \sigma_i \mathrm{d}W_i ) + (1-p_j(t))y_j(t)(\mu_j\mathrm{d}t+ \mathrm{d}W_j).
    \end{align*}
The integral of this expression gives the total amount of resources of entity $j$ gathered from sources other than $i$, as well as resources from $i$ with a changed drift, $\hat{\mu}_i:= \mu_i -1$ between times $T_0$ and $T$. Therefore it is non-negative. We can use this fact to arrive at a lower bound of $g_j(T)= \frac{1}{T} \log(y_j(T))$ for $T-1>T_0\gg0$,
\begin{align*}
    g_j(T) &\ge \frac{1}{T} \log \left(   \int_{T_0}^T \left(  A_{ji} p_i(t)y_i(t) - p_jy_j(t) \right) \mathrm{d}t\right)\\
    %\big(-p_j \ge -1 \big) \text{  }
    &\ge \frac{1}{T} \log \left(   \int_{T_0}^T \left( A_{ji} p_i(t)y_i(t) - y_j(t) \right) \mathrm{d}t\right ),  
\end{align*}
with the second inequality following from the fact that $p_j \le 1$. The fact that the logarithms of the two integrals are well defined stems from the fact that because $g_j<g_i^\mathcal{D}$ we have $A_{ji} p_i(t)y_i(t) - y_j(t) >0$ for all $t>T_0$ with $T_0$ set to be large enough. The details are analogous to the ones in the justification of the passage to the limit that can be found below in inequality (\ref{eq:passage-to-limit}). \newline
\indent We can further relax the inequality by applying the Gronwall-Bellman lemma~\cite{Khasminskii, Touzi} (or, alternatively, the assumption of the existence of a finite $g_i^*$) to the properties of the update rule~(\ref{eq:update-rule}) of $p_i(t)$. Formally, the lemma implies that the norm $||z(t)||$ of a solution to a differential equation $dz = A(z(t),t)zdt+B(z(t),t)zdW$ with continuous bounded coefficients $|A|+|B|<M $ is bounded from above by a function of the form $t \mapsto \theta_1  \exp( \theta_2 t)$. In other words, $||z(t)|| \le \theta_1  \exp( \theta_2 t)$ for some coefficients $\theta_1, \theta_2 \ge 0$ and all $t>0$. Since the differential equation defining $ y(t)$ is with bounded coefficients, we can apply the lemma to $y_i(t)$ to deduce that $g_i(t)$ is also bounded. In our case we have 
\begin{align*}
    p_i(t) &= \frac{1}{1+\exp[-t^{\alpha}(g_i(t) - g_i^\mathcal{D})]} \ge \frac{1}{\xi+ \exp(\eta t^\alpha)},
\end{align*}
with constants $\xi,\eta \ge 0$ dependent only on the realization. This reduces the lower bound to
\begin{align*}
    g_j(T) &\ge \frac{1}{T}\log \left(  \int_{T_0}^T \left( \frac{ A_{ji}}{\xi+ \exp(\eta t^\alpha)}y_i(t) -y_j(t) \right) \mathrm{d}t\right )\\
  &\ge \frac{1}{T}\log \left(   \int_{T-1}^T \left(\frac{A_{ji}}{\xi+ \exp(\eta t^\alpha)}y_i(t) - y_j(t) \right) \mathrm{d}t\right ).
\end{align*}
Next, by applying Jensen's inequality to $x \mapsto \log(x)$ we further get that
\begin{align}
 g_j(T) &\ge \frac{1}{T}  \int_{T-1}^T\log \left( \frac{A_{ji}}{\xi+ \exp(\eta t^\alpha)}y_i(t) -y_j(t) \right) \mathrm{d}t.
\label{eq:gen-rep-inequality}
\end{align}
By taking the limit of this expression we get that
\begin{align*}
   g_j^* &\geq g_i^*, \\
       &\geq g_i^{\mathcal{D}},
\end{align*}
with the last inequality following from the prevention of exploitation property described previously. Altogether, this is the desired contradiction. Since the graph is connected, the property $g_j^* \ge g_i^\mathcal{D}$ is propagated to each entity. Now if $g_i^* > g_i^{\mathcal{D}}$, we would have $p_i^* =1$, and since all $p_j^*=1$ by $g_j^*>g_j^\mathcal{D}$, the growth $g_i^*$ would be equal to $g^\mathcal{C}$, which is impossible since $g_i^* \ge g_i^\mathcal{D}>g^\mathcal{C}$ where the first inequality is the prevention of exploitation property.
\end{proof}
Let us finish by justifying the passage to the limit of the right hand side in the inequality (\ref{eq:gen-rep-inequality}). First notice that for $k \in \{i,j\}$ we have $y_k(t) = \exp(t g_k^* + \varphi_k(t))$ for some functions $ \varphi_k(t)$ such that $\lim_{t \to \infty} \frac{\varphi_k(t)}{t} =0$. Since $g_i^* \ge g_i^\mathcal{D}$, we have, 
\begin{equation}
\begin{split}
     y_i(t) &= \exp(t g_i^* + \varphi_i(t)) \\
     &\ge  \exp(t g_i^\mathcal{D} + \varphi_i(t)), \\
     &\ge \exp(tg_j^* + t \varepsilon + \varphi_i(t)), \\
     &\ge_{+ \infty} \exp(tg_j^* + t\frac{\varepsilon}{2}+ \varphi_j(t)),\\
     &= \exp(t\frac{ \varepsilon}{2}) y_j(t),
\end{split}
\label{eq:passage-to-limit}
\end{equation} 
for all large $t$, and for any small $\varepsilon>0$ so that $g_i^\mathcal{D} > g_j^*+ \varepsilon$. Indeed, there exists $t_{\varepsilon}>0$ such that $t\frac{\varepsilon}{2} \ge |\varphi_j(t)| + | \varphi_i(t)|$ for all $t> t_{\varepsilon}$. In turn, for large $T$, $$\frac{y_j(T)(\xi+ \exp(\eta T^{\alpha}))}{y_i(T)A_{ji}} \le \frac{\xi +\exp(\eta T^{\alpha})}{A_{ji} \exp(T\frac{ \varepsilon}{2})} \xrightarrow[T \to \infty]{} 0,$$ implying that  
    \begin{align*}\frac{1}{T}\log \left( \frac{A_{ji}}{\xi+ \exp(\eta T^\alpha)}y_i(T) -y_j(T) \right)&= \\
    =& \overbrace{\frac{1}{T} \log \left( \frac{A_{ji}}{\xi+ \exp(\eta T^\alpha)}y_i(T)\right)}^{\xrightarrow[T \to \infty]{} g_i^*} +
    \overbrace{\frac{1}{T}\log \left( 1 -\frac{y_j(T)(\xi+ \exp(\eta T^{\alpha})}{y_i(T)A_{ji}} \right)}^{\xrightarrow[T \to \infty]{} 0}. 
    \end{align*} 
Finally, applying the change of variables $t' := t-T+1$ we obtain the expression
    \begin{align*}
        &\frac{1}{T}  \int_{T-1}^T\log \left( \frac{A_{ji}}{\xi+ \exp(\eta t^\alpha)}y_i(t) -y_j(t) \right)\mathrm{d}t  \\
        &=\int_{0}^1  \frac{1}{T}\log \left( \frac{A_{ji}}{\xi+ \exp(\eta (t' +T-1)^\alpha)}y_i(t' +T-1) -y_j(t' +T-1) \right)\mathrm{d}t',
    \end{align*}
    to which we can apply the dominated convergence theorem  ~\cite{Durett} as $T \to \infty$ thereby completing the justification of the passage to the limit.
    
%\bibliographystyle{model1-num-names}
%\bibliography{coop-bib2}  

\end{document}